\pgfplotsset{compat=1.8}
\newcommand{\prefu}{\stackrel{u{\succ}}}
\newcommand{\SO}{\texttt{SAROS}}
\newcommand{\BPR}{\texttt{BPR}}
\newcommand{\MostPop}{\texttt{MostPop}}
\newcommand{\MF}{\texttt{MF}}
\newcommand{\GRU}{\texttt{GRU4Rec}}
\newcommand{\ProdVec}{\texttt{Prod2Vec}}
\newcommand{\batch}{\texttt{BPR$_b$}}
\newcommand{\caser}{\texttt{Caser}}
\newcommand{\SASR}{\texttt{SASRec}}
\newcommand{\mapk}{\texttt{MAP@K}}
\newcommand{\apk}{\texttt{AP@K}}
\newcommand{\mapfive}{\texttt{MAP@5}}
\newcommand{\mapten}{\texttt{MAP@10}}
\newcommand{\ndcgfive}{\texttt{NDCG@5}}
\newcommand{\ndcgten}{\texttt{NDCG@10}}
\newcommand{\ndcgk}{\texttt{NDCG@K}}
\newcommand{\dcgk}{\texttt{DCG@K}}
\newcommand{\idcgk}{\texttt{IDCG@K}}
\newcommand{\NetF}{\textsc{Netflix}}
\newcommand{\RecS}{\textsc{RecSys'16}}
\newcommand{\Out}{\textsc{Outbrain}}
\newcommand{\ML}{\textsc{ML}}
\newcommand{\kasandr}{\textsc{Kasandr}}
\newcommand{\PANDOR}{\textsc{Pandor}}
\newcommand{\calI}{\mathcal{I}} 
\newcommand{\calS}{\mathcal{U}} 
\newcommand{\calB}{\mathcal{B}} 
\newcommand{\bR}{\mathbb{R}} 
\newcommand{\bfU}{\bar U}
\newcommand{\bfI}{\bar I} 
\newcommand{\calL}{\mathcal{L}} 
\newcommand{\weight}{\omega}
\newcommand{\posI}{\mathcal{I}^{+}} 
\newcommand{\negI}{\mathcal{I}^{-}} 
\newtheorem{theorem}{Theorem}
\newtheorem{assumption}{Assumption}
\title{Learning over no-Preferred and Preferred Sequence of items for Robust Recommendation}
\author[1,6]{Aleksandra Burashnikova \thanks{Corresponding author, aleksandra.burashnikova@skoltech.ru}}
\author[2]{Marianne Clausel}
\author[3]{Charlotte Laclau}
\author[4]{Frack Iutzeller}
\author[1,5]{Yury Maximov}
\author[6]{Massih-Reza Amini}
\affil[1]{Skolkovo Institute of Science and Technology}
\affil[2]{Department of Statistics University of Lorraine}
\affil[3]{Hubert Curien Laboratory Telecom Saint-Etienne}
\affil[4]{Jean Kuntzmann Laboratory, University Grenoble Alpes}
\affil[5]{Theoretical Division T-5, Los Alamos National Laboratory}
\affil[6]{Laboratoire d'Informatique de Grenoble, University Grenoble-Alpes}
\date{\today}
\begin{document}

\maketitle

\begin{abstract}
In this paper, we propose a theoretically founded sequential strategy for training large-scale Recommender Systems (RS) over implicit feedback, mainly in the form of clicks. The proposed approach consists in minimizing pairwise ranking loss over blocks of consecutive items constituted by a sequence of non-clicked items followed by a clicked one for each user. We present two variants of this strategy where model parameters are updated using either the momentum method or a gradient-based approach. To prevent from updating the parameters for an abnormally high number of clicks over some targeted items (mainly due to bots), we introduce an upper and a lower threshold on the number of updates for each user. These thresholds are estimated over the distribution of the number of blocks in the training set. The thresholds affect the decision of RS and imply a shift over the distribution of items that are shown to the users. Furthermore, we provide a convergence analysis of both algorithms and demonstrate their practical efficiency over six large-scale collections, both regarding different ranking measures and computational time.
\end{abstract}

\section{Introduction}
\label{Introduction}

With the increasing number of products available online, there is a surge of interest in the design of automatic systems --- generally referred to as Recommender Systems (RS) --- that provide personalized recommendations to users by adapting to their taste. The study of RS has become an active area of research these past years, especially since the  Netflix Price \cite{Bennett07}. One characteristic of online recommendation is the huge unbalance between the available number of products and those shown to the users. On the other hand, bots that interact with the system by providing too much feedback over some targeted items \cite{7824865}. Contrariwise, many users do not interact with the system over the items that are shown to them. In this context, the main challenges concern the design of a scalable and an efficient online RS in the presence of noise and unbalanced data. These challenges have evolved in time with the continuous development of data collections released for competitions or issued from e-commerce\footnote{
\url{https://www.kaggle.com/c/outbrain-click-prediction}}.  Recent approaches for RS \cite{wang2020setrank,8616805} now primarily consider feedback, mostly in the form of clicks that are easier to collect than {\it explicit} feedback, which is in the form of scores. Implicit feedback is more challenging to deal with as they do not depict the preference of a user over items, i.e., (no)click does not necessarily mean (dis)like \cite{Hu:2008}. In this case, most of the developed approaches are based on the Learning-to-rank paradigm and focus on how to leverage the click information over the unclick one without considering the sequence of users' interactions. 

In this paper, we propose $\SO_{}$, a sequential strategy for recommender systems with implicit feedback that updates  model parameters user per user over blocks of items constituted by a sequence of unclicked items followed by a clicked one. We present two variants of this strategy.  The first one, referred to as $\SO_m$, updates  model parameters at each time a block of unclicked items followed by a clicked one is formed after a user's interaction. Parameters' updates are carried out by minimizing the average ranking loss of the current model that scores the clicked item below the unclicked ones  using a momentum method \cite{PolyakB,nemirovskii1985optimal}. The second strategy, referred to as $\SO_b$, updates model parameters by minimizing the ranking loss over the same blocks of unclicked items followed by a clicked one using a gradient descent approach; with the difference that parameter updates are discarded for users who interact very little or a lot with the system.

In this paper, 
\begin{itemize}
    \item We propose a unified framework in which we study the convergence properties of both  versions of \SO{} in the general case of non-convex ranking losses. This is an extension of the work of \cite{Burashnikova19}, where only the convergence of $\SO_b$ was studied in the case of convex ranking losses.
    \item Furthermore, we provide empirical evaluation over six large publicly available datasets showing that both versions of $\SO_{}$ are highly competitive compared to the state-of-the-art models in terms of quality metrics and, that are significantly faster than both the batch and the online versions of the algorithm.
\end{itemize}

The rest of this paper is organized as follows. Section \ref{sec:soa} relates our work to previously proposed approaches. Section \ref{sec:Frame} introduces the general ranking learning problem that we address in this study. Then, in Section~\ref{sec:TA}, we present both versions of the $\SO_{}$ algorithm, $\SO_b$ and $\SO_m$, and provide an analysis of their convergence. Section \ref{sec:Exps} presents experimental results that support our approach. Finally, in Section \ref{sec:Conclusion}, we discuss the outcomes of this study and give some pointers to further research.

\section{Related work}
\label{sec:soa}

Two main approaches have been proposed for recommender systems. The first one, Content-Based recommendation or cognitive filtering \cite{Pazzani2007}, makes use of existing contextual information about the users (e.g., demographic information) or items (e.g., textual description) for the recommendation. The second approach, Collaborative Filtering, is undoubtedly the most popular one~\cite{Su:2009}, relies on past interactions and recommends items to users based on the feedback provided by other similar users. 

Traditionally, collaborative filtering systems have been designed using  {\it explicit} feedback, mostly in the form of rating~\cite{Koren08}. However,  rating information is non-existent on most of e-commerce websites and is challenging to collect, and user interactions are often done sequentially. Recent RS systems focus on learning scoring functions using {\it implicit} feedback to assign higher scores to clicked items than to unclicked ones rather than to predict the clicks as it is usually the case when we are dealing with explicit feedback~\cite{He2016,rendle_09,Zhang:16}. The idea here is that even a clicked item does not necessarily express the preference of a user for that item, it has much more value than a set of unclicked items for which no action has been made.

In most of these approaches, the objective is to rank the clicked item higher than the unclicked ones by finding a suitable representation of users and items in a way that for each user the ordering of the clicked items over unclicked ones is respected by dot product in the joint learned space. One common characteristic of publicly available collections for recommendation systems is the huge unbalance between positive (click) and negative feedback (no-click) in the set of items displayed to the users, making the design of an efficient online RS extremely challenging. Some works propose to reweight the impact of positive and negative feedback directly in the objective function \cite{Pan:2008} to improve the quality. Another approach is to sample data over a predefined set of interactions before learning \cite{Liu2016}. 

 Many new approaches tackle the sequential learning problem for RS by taking into account the temporal aspect of interactions directly in the design of a dedicated model and are mainly based on Markov Models (MM), Reinforcement Learning (RL), and Recurrent Neural Networks (RNN) \cite{Donkers:2017}. Recommender systems based on Markov Models, consider a subsequent interaction of users as a stochastic process over discrete random variables related to predefined user behavior. These approaches suffer from some limitations, mainly due to the sparsity of the data leading to a poor estimation of the transition matrix \cite{GuyShani} and choice of an appropriate order for the model \cite{He}.
Various strategies have been proposed to leverage the limitations of Markov Models. For instance, \cite{GuyShani} proposes to consider only the last frequent sequences of items and using finite mixture models. \cite{He} suggests combining similarity-based methods with high-order Markov Chains. Although it has been shown that in some cases, the proposed approaches can capture the temporal aspect of user interactions, these models suffer from a high time-complexity and do not pass the scale.  Some other methods consider RS as a Markov decision process (MDP) problem and solve it using reinforcement learning (RL) \cite{Moling,Tavakol}.
The size of discrete actions bringing the RL solver to a larger class of problems is also a bottleneck for these approaches. Recently many Recurrent Neural Networks (RNN) such as GRU or LSTM have been proposed for personalized recommendations \cite{hidasi2018recurrent,tang2018caser,DBLP:conf/icdm/Kang18}.
In this approach, the input of the network is generally the sequence of user interactions consisted of a single behaviour type (click,  adding to favourites, purchase, etc.) and the output is the predicted preference over items in the form of posterior probabilities of the considered behaviour type given the items. A comprehensive survey of Neural Networks based sequential approaches for personalized recommendation is presented in \cite{Fang:20}. All these approaches do not consider negative interactions; i.e. viewed items that are not clicked or purchased; and the system's performance on new test data may be affected.

Our approach differs from other sequential based methods in the way that the model parameters are updated, at each time a block of unclicked items followed by a clicked one is constituted. This update scheme follows the hypothesis that user preference is not absolute over the items which were clicked, but it is relative with respect to  items that were viewed. Especially, we suppose that a user may not prefer an item in absolute but may click on it if the item is shown in a given context respectively to other items that were shown but not have been clicked. For this update scheme we further propose two variants by $(1)$ either smoothing the parameter updates with the momentum technique; or $(2)$ controlling the number of blocks per user interaction. 
That is, if for a given user the number of blocks is below or above two predefined thresholds found over the distribution of the number of blocks,  parameter updates for the user are discarded. We further provide a proof of convergence of both variants of the proposed approach in the general case of non-convex loss functions.

\section{Framework and Problem Setting}
\label{sec:Frame}

Throughout, we use the following notation. For any positive integer $n$, $[n]$ denotes the set $[n]= \{1,\ldots,n\}$. We suppose that $\calI= [M]$ and $\calS= [N]$ are two sets of indexes defined over items and users. Further, we assume that each pair constituted by a user $u$ and an item $i$ is identically and independently distributed (i.i.d) according to a fixed yet unknown distribution ${\cal D}$. 
At the end of his or her session, a user $u\in\calS$ has reviewed a subset of items $\calI_u\subseteq \calI$ that can be decomposed into two sets: the set of preferred and non-preferred items denoted by $\posI_u$ and $\negI_u$, respectively. Hence, for each pair of items $(i,i')\in\posI_u\times \negI_u$, the user $u$ prefers item $i$ over item $i'$; symbolized by the relation $i\!\prefu\! i'$. From this preference relation a desired output $y_{u,i,i'}\in\{-1,+1\}$ is defined over the pairs $(u,i)\in\calS\times\calI$ and $(u,i')\in\calS\times\calI$, such that $y_{u,i,i'}=+1$ if and only if   $i\!\prefu\! i'$.  We suppose that the indexes of users as well as those of items in the set $\calI_u$, shown to the active user $u\in\calS$, are ordered by time. 

Finally, for each user $u$, parameter updates are performed over blocks of consecutive items where a  block $\calB_u^t=\text{N}_u^{t}\sqcup\Pi_u^{t}$, corresponds to a time-ordered sequence (w.r.t. the time when the interaction is done) of no-preferred items, $\text{N}_u^{t}$, and at least one preferred one, $\Pi_u^{t}$. Hence, $\posI_u=\bigcup_t \Pi_u^{t}$ and $\negI_u=\bigcup_t \text{N}_u^{t}; \forall u\in\calS$. Notation is summarized in Table~$\ref{table:notation}$.

\subsection{Learning Objective}
\label{sec:LO}
Our objective here is to minimize an expected error penalizing the misordering of all pairs of interacted items $i$ and $i'$ for a user $u$. Commonly, this objective is given under the Empirical Risk Minimization (ERM) principle, by minimizing the empirical ranking loss estimated over the items and the final set of users who interacted with the system~:
\begin{align}\label{eq:RL}
    \hat{\calL}_u(\weight)\!=\!\frac{1}{|\posI_u||\negI_u|}\!\sum_{i\in \posI_u}\!\sum_{i'\in \negI_u} \!\ell_{u,i,i'} (\weight),
\end{align}
\noindent where, $\ell_{u, i, i'} (.)$ is an instantaneous ranking loss defined over the triplet $(u,i,i')$ with    
 $i\!\prefu\! i'$. Hence,  $\hat{\calL}_u(\weight)$ is the pairwise ranking loss with respect to user's interactions and $\calL(\weight)~=~{\mathbb E}_{u} \left[\hat{\calL}_u(\weight)\right]$ is the expected ranking loss, where ${\mathbb E}_{u}$ is the expectation with respect to users chosen randomly according to the marginal distribution. 
 \begin{table}[t!]
\centering
 \begin{tabular}{c|l} 
 $\calI= [M]$ & The set of item indexes\\ \hline
 $\calS= [N]$ & The set of user indexes\\ \hline
 ${\cal D}$ & joint distribution over users and items \\ \hline
 ${\cal D}_{u}$ & conditional distribution of items  for a fixed user $u$\\ \hline
 $\text{N}_u^{t}$ & Negative items in block $t$ for user $u$\\\hline
 $\Pi_u^{t}$ & Positive items in block $t$ for user $u$\\\hline
 $\calB_u^t=\text{N}_u^{t}\sqcup\Pi_u^{t}$ & Negative and positive items in block $t$ for user $u$\\\hline
 $\posI_u$ & The set of all positive items for user $u$\\ \hline
 $\negI_u$ & The set of all negative items for user $u$\\ \hline 
 $\ell_{u, i, i'}(\omega)$ & Instantaneous loss for user $u$ and a pair of items $(i, i')$\\ \hline 
 $\hat{\calL}_u(\omega)$ & Empirical ranking loss with respect to user $u$\\
 & \qquad $\hat{\calL}_u(\omega) = \frac{1}{|\posI_u||\negI_u|}\!\sum_{i\in \posI_u}\!\sum_{i'\in \negI_u} \!\ell_{u,i,i'} (\omega)$\\
 \hline 
 ${\hat {\calL}}_{\calB^t_u}(\omega)$ &  Empirical ranking loss with respect to a block of items \\ 
 & \qquad ${\hat {\calL}}_{\calB^t_u}(\omega) = \frac{1}{|\Pi_u^{t}||\text{N}_u^{t}|}\sum_{i \in \Pi_u^{t}} \sum_{i'\in \text{N}_u^{t}} \ell_{u, i, i'} (\omega)$ \\
 \hline
 ${\calL}(\omega)$ & Expected ranking loss  ${\calL}(\omega) = \mathbb{E}_{{\cal D}_u}\hat{\calL}_u(\omega)$
 \end{tabular}
 \caption{Notation}
 \label{table:notation}
\end{table}
As in other studies, we represent each user $u$ and each item $i$ respectively by vectors $\bfU_u\in\bR^k$ and $\bfI_i\in\bR^k$ in the same latent space of dimension $k$ \cite{Koren:2009}. The set of weights to be found $\weight=(\bfU,\bfI)$, are then matrices formed by the vector representations  of users $\bfU=(\bfU_u)_{u\in [N]}\in\bR^{N\times k}$ and items $\bfI=(\bfI_i)_{i\in[M]}\in\bR^{M\times k}$. 
The minimization of the ranking loss above in the batch mode with the goal of finding user and item embeddings, such that the dot product between these representations in the latent space reflects the best the preference of users over items, is a common approach. Other strategies have been proposed for the minimization of the empirical loss \eqref{eq:RL}, among which the most popular one is perhaps the Bayesian Personalized Ranking (\BPR) model \cite{rendle_09}. In this approach, the instantaneous loss, $\ell_{u,i,i'}$, is the surrogate regularized logistic  loss for some hyperparameter $\mu \ge 0$:
\begin{align}
\label{eq:instloss}
    \ell_{u,i,i'}(\weight) =  \log\left(1+e^{-y_{u,i,i'}\bfU_u^\top(\bfI_{i}-\bfI_{i'})}\right)+\lambda (\|\bfU_u\|_2^2+\|\bfI_{i}\|_2^2 + \|\bfI_{i'}\|_2^2)
\end{align}

The {\BPR} algorithm proceeds by first randomly choosing a user $u$, and then repeatedly selecting two pairs $(i,i')\in \calI_u\times \calI_u$.
In the case where one of the chosen items is preferred over the other one (i.e., $y_{u,i,i'}\in\{-1,+1\}$), the algorithm then updates the weights using the stochastic gradient descent method over the instantaneous loss \eqref{eq:instloss}.  

\subsection{Algorithm \SO{}}
\label{sec:Algo}

Another aspect is that user preference over items depend mostly on the context where these items are shown to the user. A user may prefer (or not) two items independently one from another, but within a given set of shown items, he or she may completely have a different preference over these items. By  randomly sampling triplets constituted by a user and corresponding clicked and unclicked items selected over the whole set of shown items to the user, this effect of local preference is not taken into account. Furthermore, triplets corresponding to different users are non uniformly distributed, as interactions vary from one user to another user, and for parameter updates; triplets corresponding to low interactions have a small chance to be chosen. In order to tackle these points; we propose to update the parameters sequentially over the blocks of non-preferred items followed by preferred ones for each user $u$. The constitution of $B+1$ sequences of non-preferred and preferred blocks of items for two users $u$ and $u+1$ is shown in Figure \ref{fig:SILICOM}. 

\begin{figure}[!th]
    \centering
    \includegraphics[width=.95\textwidth]{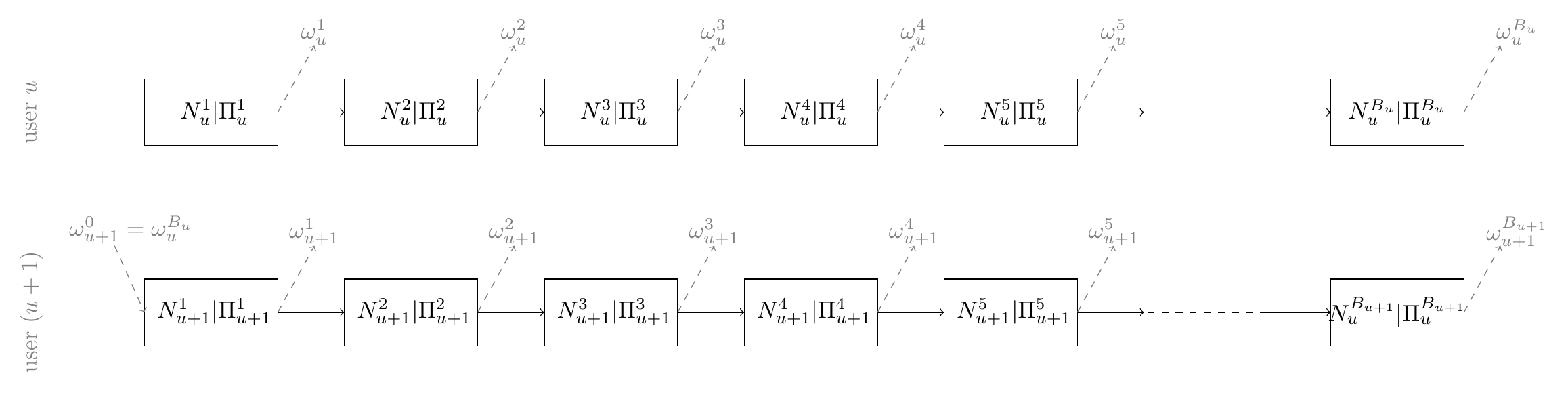}
    \caption{The horizontal axis represents the sequence of interactions over items ordered by time. Each update of weights $\omega_u^{t}; t\in\{b,\ldots,B\}$ occurs whenever the corresponding sets of negative interactions, $\text{N}^t_u$, and positive ones, $\Pi_u^t$, exist.}
    \label{fig:SILICOM}
\end{figure}

In this case, at each time  a block $\calB_u^t=\text{N}_u^{t}\sqcup\Pi_u^{t}$ is formed for user $u$; weights are updated by miniminzing the ranking loss corresponding to this block~:

\begin{equation}
\label{eq:CLoss}
    {\hat {\calL}}_{\calB_u^t}(\weight_u^{t}) = \frac{1}{|\Pi_u^{t}||\text{N}_u^{t}|}\sum_{i \in \Pi_u^{t}} \sum_{i'\in \text{N}_u^{t}} \ell_{u, i, i'} ({\weight}_u^{t}).
\end{equation}

We propose two strategies for the update of weights. In the first one, referred to as {\SO$_m$}, the aim is to carry out an effective minimization of the ranking loss \eqref{eq:CLoss} by lessening the oscillations of the updates through the minimum. This is done by defining the updates as the linear combination of the gradient of the loss \eqref{eq:CLoss}, $\nabla  \widehat{\calL}_{{\mathcal B}^t_u}(w_u^{t})$, and the previous update as in the momentum technique at each iteration $t$~:

\begin{align}
\label{thm11:eq1}
    v_u^{t+1}&=\mu\cdot v_u^{t}+(1-\mu)\nabla  \widehat{\calL}_{{\mathcal B}^t_u}(w_u^{t})\\
    w_u^{t+1}&= w_u^{t}-\alpha v_u^{t+1}
\end{align}

\noindent 
where $\alpha$ and $\mu$ are hyperparameters of the linear combination. In order to explicitly take into account bot attacks -- in the form of excessive clicks over some target items -- we propose a second variant of this strategy, referred to as {$\SO_b$}. This variant consists in fixing two thresholds $b$ and $B$  over the parameter updates. For a new user $u$, model parameters are updated if and only if the number of the  blocks of items constituted for this user is within the interval $[b,B]$. 

\begin{algorithm}[t!]
   \caption{$\SO_b$: SequentiAl RecOmmender System for implicit feedback}
   \label{alg:CC-conv}
\begin{algorithmic}
\State {\bfseries Input:} A sequence (user and items)  $\{(u,(i_1, \dots, i_{|\mathcal{I}_u|})\}_{u=1}^N$ drawn i.i.d. from~${\cal D}$; \\
    \hspace{15mm}A maximal $B$ and a minimal $b$ number of blocks allowed per user $u$;\\
    \hspace{15mm}A maximal number of epochs $E$;
\State {\bfseries Initialization:} Initialize parameters $\omega_1^0$ randomly;
\For{$e = 1 ... E$} \Comment{\color{gray}Loop over all epochs\color{black}}
    \For{$u = 1 ... N$} \Comment{\color{gray}Loop over all users $u$\color{black}}
        \State {$t\leftarrow 0$; $\calI_u=(i_1, \dots, i_{|\mathcal{I}_u|})$ the sequence of items viewed by the user $u$;}
        \State {$j\leftarrow 1; \text{N}_u^{t} \leftarrow \varnothing; \, \Pi_u^t \leftarrow \varnothing;$}
        \While{$t \leq B$ and $j\leq |\mathcal{I}_u|$} \Comment{\color{gray} Loop over items  displayed to user $u$\color{black}}
            \While{$feedback(u,i_j)=-1$ and $j\leq |\mathcal{I}_u|$}
                \Comment{\color{gray}{\scriptsize While $u$ has a negative feedback on $i_j$}\color{black}}
                \State $\text{N}_u^t \leftarrow \text{N}_u^t \cup \{i\}; j\leftarrow j+1$
            \EndWhile
            \While{$feedback(u,i_j)=+1$ and $j\leq |\mathcal{I}_u|$}
                \Comment{\color{gray}{\scriptsize While $u$ has a positive feedback on $i_j$}\color{black}}
                \If{ $\text{N}_u^{t}\neq \varnothing$}
                    \Comment{\color{gray}{\scriptsize If there were negative feedback before the positive ones}\color{black}}
                    \State $\Pi_u^t \leftarrow \Pi_u^t \cup \{i_j\}$; 
                \EndIf
                \State $j\leftarrow j+1;$
            \EndWhile
            \If{ $\text{N}_u^{t} \neq \varnothing$ and $\Pi_u^{t} \neq \varnothing$}
            \Comment{\color{gray}{\scriptsize If  negative and positive blocks are constituted}\color{black}}
                \State {\small $\weight_u^{t+1} \leftarrow\weight_u^t - \frac{\eta}{|{\text{N}_u^{t}}||\Pi_u^{t}|}  \displaystyle{\sum_{i\in\Pi_u^{t}}\sum_{i'\in \text{N}_u^{t}}} \nabla \ell_{u,i,i'} (\weight_{u}^t)$;}
                \State $t \leftarrow t+1; \text{N}_u^t \leftarrow \varnothing; \Pi_u^t \leftarrow \varnothing;$ 
            \EndIf
        \EndWhile
        \If{$t \leq b$}
            \Comment{\color{gray}{\scriptsize If the number of blocks is less than $b$, do not consider the updates }\color{black}}
            \State $\weight_{u}^t \leftarrow \weight_{u}^0$;
        \EndIf
        \If{$u<N$}
            \Comment{\color{gray}{\scriptsize Initialize the weights for the next user with the current ones}\color{black}}
            \State $\weight_{u+1}^0 \leftarrow \weight_{u}^t$;
        \Else 
            \Comment{\color{gray}{\scriptsize The next user to the last one in the  list of the users, is the first user}\color{black}}
            \State $\weight_{1}^0 \leftarrow \weight_{N}^t$;
        \EndIf
    \EndFor
\EndFor
\State {\bfseries Return:} The last updated weights; 
\end{algorithmic}
\end{algorithm}

The pseudo-code of {$\SO_b$} is shown in the following. Starting from initial weights $\weight_1^0$ chosen randomly for the first user. The sequential update rule, for each current user $u$ consists in updating the weights by making one step towards the opposite direction of the gradient of the ranking loss estimated on the current block,  $\calB_u^t=\text{N}_u^{t}\sqcup\Pi_u^{t}$~:

\begin{equation}
\weight_u^{t+1} = \weight_u^t - \frac{\eta}{|{\text{N}_u^{t}}||\Pi_u^{t}|}  \displaystyle{\sum_{i\in\Pi_u^{t}}\sum_{i'\in \text{N}_u^{t}}} \nabla \ell_{u,i,i'} (\weight_{u}^t)
\end{equation}

For a given user $u$, parameter updates are discarded if their number (i.e. number of blocks $(\calB_u^t)_t$ for that user) falls outside the interval $[b,B]$. In this case, parameters are initialized with respect to the latest update before user $u$ and they are updated with respect to a new user's interactions.

\subsection{Convergence analysis}
\label{sec:TA}

We provide proofs of convergence for both variants of \SO{} under the typical hypothesis that the system is not instantaneously affected by the sequential learning of the weights. This hypothesis stipulates that the generation of items shown to users is independently and identically distributed with respect to some stationary in time underlying distribution~${\cal D}_{\cal I}$, which constitutes the main hypothesis of almost all the existing studies. More precisely, our work is based on the following hypothesis.  


\begin{assumption}\label{ass:1}
    For any $u, t \ge 1$, we have 
    \begin{enumerate}
        \item 
        $\mathbb{E}_{(u,{\cal B}_u^t)} \|\nabla {\calL}(\omega_u^t) - \nabla \hat{\calL}_{{\cal B}_u^t}(\omega_u^t)\|_2^2 \le \sigma^2$,
        \item  For any $u$,  $\left|\mathbb{E}_{ {{\cal B}_u^t}|u} \langle\nabla {\calL}(\omega_u^t), \nabla {\calL}(\omega_u^t) -  \nabla \hat{\calL}_{{\cal B}_u^t}(\omega_u^t) \rangle \right| \le a^2 \|\nabla {\calL}(\omega_u^t)\|_2^2$
    \end{enumerate}
    for some parameters $\sigma>0$ and $a\in [0,1/2)$ independent of $u$ and $t$. 
\end{assumption}

The first assumption is common in stochastic optimization and it implies consistency of the sample average approximation of the gradient. However, the first assumption is not sufficient to prove the convergence because of interdependency of different blocks of items for the same user. 

The second assumption implies that in the neighborhood of the optimal point we have $\nabla {\calL}(\omega_u^t)^\top \nabla \hat{\calL}_{{\cal B}_u^t}(\omega_u^t) \approx \|\nabla {\calL}(\omega_u^t)\|_2^2$, which greatly helps to establish consistency and convergence rates for both variants of the methods. In particular, if an empirical estimate of the loss over a block is unbiased, e.g. $\mathbb{E}_{\calB_u^t} \nabla {\hat \calL}_{\calB_u^t}(\omega) = \nabla \calL (\omega)$, the second assumption is satisfied with $a = 0$. 

The following theorem establishes the convergence rate for the \SO{}$_b$ algorithm. 

\begin{theorem}\label{thrm:new-01}
Let $\ell$ be a (possibly non-convex) $\beta$-smooth loss function. Assume, moreover,  that the number of interactions per user belongs to an interval $[b, B]$ almost surely and assumption~\ref{ass:1} is satisfied with some constants $\sigma^2$ and $a$, $0 < a < 1/2$. 
Then, for a step-size policy $\eta_u^t \equiv \eta_u$ with $\eta_u\leq 1/(B\beta)$ for any user $u$, one has
\begin{align}
\min_{u:\, 1\le u \le N}\mathbb{E}    \|\nabla {\calL}(\omega_u^0)\|_2^2 \le  \frac{2({\calL}(\omega_1^0) - {\calL}(\omega_u^0)) + \beta \sigma^2 \sum_{u=1}^N \sum_{t=1}^{|{\cal B}_u|}(\eta_u^t)^2}{\sum_{u=1}^N \sum_{t=1}^{|{\cal B}_u|} \eta_u^t(1 - a^2 - \beta \eta_u^t(1/2 - a^2))}.
\end{align}
In particular, for a constant step-size policy $\eta_u^t = \eta = c/\sqrt{N}$ satisfies $\eta \beta \le 1$, one has 
\begin{align*}
    \min_{t, u} \|\nabla {\calL}(\omega_{u}^t)\|_2^2 \le 
    \frac{2}{b(1-4a^2)} \frac{2 ({\calL}(\omega_1^0) - {\calL}(\omega_*))/c + \beta c \sigma^2 B}{\sqrt{N}}. 
\end{align*}
\end{theorem}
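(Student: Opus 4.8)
The plan is to establish a one-step descent inequality across a single block update, then telescope over all blocks and all users. Since $\ell$ is $\beta$-smooth, so is $\hat{\calL}_{{\cal B}_u^t}$ and $\calL$, giving the standard quadratic upper bound
\begin{equation*}
\calL(\omega_u^{t+1}) \le \calL(\omega_u^t) + \langle \nabla \calL(\omega_u^t), \omega_u^{t+1}-\omega_u^t\rangle + \frac{\beta}{2}\|\omega_u^{t+1}-\omega_u^t\|_2^2.
\end{equation*}
Substituting the update rule $\omega_u^{t+1}-\omega_u^t = -\eta_u^t \nabla \hat{\calL}_{{\cal B}_u^t}(\omega_u^t)$, I would take the conditional expectation $\mathbb{E}_{{\cal B}_u^t\mid u}$ of both sides. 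The cross term produces $-\eta_u^t \langle \nabla \calL(\omega_u^t), \mathbb{E}\,\nabla \hat{\calL}_{{\cal B}_u^t}(\omega_u^t)\rangle$, which I rewrite as $-\eta_u^t \|\nabla \calL(\omega_u^t)\|_2^2$ plus a correction $\eta_u^t \langle \nabla \calL(\omega_u^t), \nabla \calL(\omega_u^t)-\mathbb{E}\,\nabla \hat{\calL}_{{\cal B}_u^t}(\omega_u^t)\rangle$. The second part of Assumption~\ref{ass:1} bounds the correction in absolute value by $a^2 \|\nabla \calL(\omega_u^t)\|_2^2$. For the quadratic term, I would write $\|\nabla \hat{\calL}\|_2^2 = \|\nabla \calL\|_2^2 + 2\langle \nabla \calL, \nabla \hat{\calL}-\nabla \calL\rangle + \|\nabla \hat{\calL}-\nabla \calL\|_2^2$ and control the last two summands in expectation using both parts of the assumption, namely $a^2\|\nabla\calL\|_2^2$ and $\sigma^2$.

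Collecting terms, the per-block bound should read
\begin{equation*}
\mathbb{E}\,\calL(\omega_u^{t+1}) \le \mathbb{E}\,\calL(\omega_u^t) - \eta_u^t\bigl(1 - a^2 - \beta\eta_u^t(\tfrac12 - a^2)\bigr)\mathbb{E}\|\nabla \calL(\omega_u^t)\|_2^2 + \frac{\beta}{2}(\eta_u^t)^2\sigma^2.
\end{equation*}
The coefficient of the gradient norm is nonnegative precisely when $\eta_u^t \le 1/(B\beta)$ together with $a<1/2$, which is where the step-size restriction and the upper threshold $B$ enter. Summing this telescoping inequality over $t=1,\dots,|{\cal B}_u|$ and then over all users $u=1,\dots,N$, the left and right $\calL$ values cancel in pairs, leaving $\calL(\omega_1^0)-\calL(\omega_u^0)$ on one side (using that $\omega_{u+1}^0=\omega_u^t$ chains consecutive users). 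Rearranging and lower-bounding the minimum gradient norm by the weighted average yields the first displayed inequality of the theorem.

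For the second, specialized bound, I would plug in the constant step $\eta = c/\sqrt{N}$ and use $\eta\beta\le 1$. The denominator $\sum_{u,t}\eta_u^t(1-a^2-\beta\eta_u^t(\tfrac12-a^2))$ is lower-bounded by observing that each factor exceeds roughly $1-a^2-(\tfrac12-a^2) = \tfrac12 - \tfrac{a^2}{2} \ge \tfrac12(1-4a^2)\cdot(\text{const})$ after simplification, and the number of blocks per user is at least $b$, giving at least $bN$ terms each of size $\eta$. The numerator's sum of squares contributes $\beta\sigma^2 \cdot (\text{at most } BN)\cdot \eta^2$. Dividing numerator by denominator and substituting $\eta=c/\sqrt{N}$ produces the $1/\sqrt{N}$ rate with the prefactor $\tfrac{2}{b(1-4a^2)}$.

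The main obstacle I anticipate is the telescoping across users rather than within a single user's trajectory. Within one user the blocks chain cleanly through the update rule, but passing from the last block of user $u$ to the first block of user $u+1$ relies on the reinitialization $\omega_{u+1}^0 \leftarrow \omega_u^t$, and one must verify that the conditional-expectation arguments remain valid despite the interdependence of blocks belonging to the same user, which is exactly the subtlety the second part of Assumption~\ref{ass:1} is designed to absorb. Care is also needed to track how the block-count truncation to $[b,B]$ interacts with the almost-sure bound on interactions so that the telescoped $\calL$ terms collapse correctly and the count of surviving terms is genuinely at least $bN$.
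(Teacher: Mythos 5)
Your proposal follows essentially the same route as the paper's proof: the same $\beta$-smoothness descent inequality with the update rule substituted, the same decomposition via $\delta_u^t = \nabla \hat{\calL}_{{\cal B}_u^t}(\omega_u^t) - \nabla {\calL}(\omega_u^t)$ controlled by the two parts of Assumption~\ref{ass:1}, the identical per-block coefficient $1 - a^2 - \beta\eta_u^t(1/2 - a^2)$, and the same telescoping over blocks and users (through the chaining $\omega_{u+1}^0 = \omega_u^t$) followed by bounding the minimum gradient norm by the weighted average. The only quibble is a harmless arithmetic slip in your sketch of the constant-step case: with $\eta\beta \le 1$ one has $1 - a^2 - \beta\eta(1/2 - a^2) \ge 1 - a^2 - (1/2 - a^2) = 1/2$ exactly (not $1/2 - a^2/2$), which is consistent with the looser $(1-4a^2)/4$ lower bound that both you and the paper ultimately use to produce the $\frac{2}{b(1-4a^2)}$ prefactor.
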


\begin{proof}
Since $\ell$ is a $\beta$ smooth function, we have for any $u$ and $t$:
\begin{align*}
    {\calL}(\omega_{u}^{t+1}) & \le {\calL}(\omega_{u}^t) + \langle\nabla {\calL}(\omega_{u}^t), \omega_u^{t+1} - \omega_u^{t} \rangle + \frac{\beta}{2}(\eta_u^t)^2 \|\nabla \hat{\calL}_{{\cal B}_u^t}(\omega_u^t)\|_2^2
    \\
    & = {\calL}(\omega_{u}^t) - \eta_u^t \langle \nabla {\calL}(\omega_{u}^t), \nabla \hat{\calL}_{{\cal B}_u^t}(\omega_u^t) \rangle +   \frac{\beta}{2}(\eta_u^t)^2 \|\nabla \hat{\calL}_{{\cal B}_u^t}(\omega_u^t)\|_2^2
\end{align*}    
Following~\cite{lan2020first}; by denoting $\delta_u^t = \nabla \hat{\calL}_{{\cal B}_u^t}(\omega_u^t) - \nabla {\calL}(\omega_u^t)$, we have: 
 \begin{align}\label{eq:01}
    {\calL}(\omega_{u}^{t+1})   & \le {\calL}(\omega_{u}^t) - \eta_u^t \langle \nabla {\calL}(\omega_{u}^t), \nabla {\calL}(\omega_{u}^t) + \delta_u^t \rangle +   \frac{\beta}{2}(\eta_u^t)^2 \|\nabla {\calL}(\omega_u^t) + \delta_u^t\|_2^2\nonumber\\
    & = 
    {\calL}(\omega_{u}^t) 
    +  \frac{\beta(\eta_u^t)^2}{2}\|\delta_u^t\|_2^2 
    - \!\!\left(\eta_u^i - \frac{\beta(\eta_u^t)^2}{2}\right)\!\!\|\nabla {\calL}(\omega_{u}^t)\|_2^2 \nonumber\\ 
    & \hspace{60mm} - \left(\eta_u^t - \beta (\eta_u^t)^2\right) \langle\nabla {\calL}(\omega_u^t), \delta_u^t\rangle 
\end{align}

\noindent Our next step is to take the expectation on  both sides of inequality~\eqref{eq:01}. According to  Assumption~\ref{ass:1}, one has for some $a\in [0, 1/2)$:
\begin{align*}
    \left(\eta_u^t - \beta (\eta_u^t)^2\right)\left|\mathbb{E} \langle\nabla {\calL}(\omega_u^t), \delta_u^t\rangle\right| \le \left(\eta_u^t - \beta (\eta_u^t)^2\right) a^2 \|\nabla \calL (\omega_u^t)\|_2^2, 
\end{align*}
where the expectation is taken over the set of blocks and users seen so far. 

Finally, taking the same expectation on both sides of inequality~\eqref{eq:01}, it comes:
\begin{align}\label{eq:02}
    {\calL}(\omega_{u}^{t+1}) &\le  {\calL}(\omega_{u}^{t}) + \frac{\beta}{2}(\eta_u^t)^2\mathbb{E}\|\delta_u^t\|_2^2 - 
    \eta_u^t(1 - \beta\eta_u^t/2  - a^2|1 - \beta\eta_u^t|) \|\nabla {\calL}(\omega_{u}^t)\|_2^2 \nonumber\\
    & \le  {\calL}(\omega_{u}^{t}) + \frac{\beta}{2}(\eta_u^t)^2\|\delta_u^t\|_2^2 - 
    \eta_u^t\underbrace{(1 - a^2 
    -    \beta \eta_u^t(1/2 - a^2))}_{:= z_u^t} \|\nabla {\calL}(\omega_{u}^t)\|_2^2 \nonumber\\
    & = {\calL}(\omega_{u}^{t}) + \frac{\beta}{2}(\eta_u^t)^2\|\delta_u^t\|_2^2 - 
    \eta_u^t z_u^t \|\nabla {\calL}(\omega_{u}^t)\|_2^2 \nonumber\\
    & = {\calL}(\omega_{u}^{t}) + \frac{\beta}{2}(\eta_u^t)^2\sigma^2 - 
    \eta_u^t z_u^t \|\nabla {\calL}(\omega_{u}^t)\|_2^2, 
\end{align}
where the second inequality is due to  $|\eta_u^t\beta|\leq 1$. Also, as $|\eta_u^t\beta|\leq 1$ and $a^2\in [0,1/2)$ one has $z_u^t>0$ for any $u,t$. Rearranging the terms, one has
\begin{align*}
    \sum_{u=1}^N\sum_{t=1}^{|{\cal B}_u|} \eta_u^t z_u^t \|\nabla {\calL}(\omega_{u}^t)\|_2^2 \le {\calL}(\omega_1^0) - {\calL}(\omega_*) + \frac{\beta \sigma^2}{2}  \sum_{u=1}^N\sum_{t=1}^{|{\cal B}_u|}(\eta_u^t)^2.
\end{align*}
and
\begin{align*}
    \min_{t, u} \|\nabla {\calL}(\omega_{u}^t)\|_2^2
    & 
    \le 
    \frac{{\calL}(\omega_1^0) - {\calL}(\omega_*) + \frac{\beta}{2}  \sum_{u=1}^N\sum_{t=1}^{|{\cal B}_u|}(\eta_u^t)^2 \sigma^2 }{\sum_{u=1}^N\sum_{t=1}^{|{\cal B}_u|} \eta_u^t z_u^t} \\
    & 
    \le \frac{{\calL}(\omega_1^0) - {\calL}(\omega_*) + \frac{\beta}{2}  \sum_{u=1}^N\sum_{t=1}^{|{\cal B}_u|}(\eta_u^t)^2 \sigma^2 }{\sum_{u=1}^N\sum_{t=1}^{|{\cal B}_u|} \eta_u^t (1 - a^2 - \beta \eta_u^t(1/2 - a^2))} 
\end{align*}
Where, $\omega_*$ is the optimal point. Then, using a constant step-size policy, $\eta_u^i = \eta$, and the bounds on a block size, $b\leq |{\cal B}_u|\leq B$, we get:
\begin{align*}
    \min_{t, u} \|\nabla {\calL}(\omega_{u}^t)\|_2^2 & \le \frac{{\calL}(\omega_1^0) - {\calL}(\omega_*) + \frac{\beta\sigma^2}{2}  N\sum_{u=1}^N\eta_u^2  }{b\sum_{u=1}^N\eta_u (1 - a^2 - \beta \eta_u (1/2 - a^2))} \\
    &
    \le  \frac{4{\calL}(\omega_1^0) - 4{\calL}(\omega_*) + 2\beta \sigma^2 B \sum_{u=1}^N\eta^2}{b(1 - 4a^2)\sum_{u=1}^N\eta} \\
    &
    \le 
    \frac{2}{b(1-4a^2)} \left\{\frac{2{\calL}(\omega_1^0) - 2{\calL}(\omega_*)}{N \eta} + \beta\sigma^2 B \eta\right\}. 
\end{align*}

Taking $\eta = c/\sqrt{N}$ so that $0 < \eta \le 1/\beta$, one has 
\begin{align*}
    \min_{t, u} \|\nabla {\calL}(\omega_{u}^t)\|_2^2 \le 
    \frac{2}{b(1-4a^2)} \frac{2 ({\calL}(\omega_1^0) - {\calL}(\omega_*))/c + \beta c \sigma^2 B}{\sqrt{N}}. 
\end{align*}
If $b = B = 1$, this rate matches up to a constant factor to the standard $O(1/\sqrt{N})$ rate of the stochastic gradient descent. 
\end{proof}
Note that the stochastic gradient descent strategy implemented in the Bayesian Personalized Ranking model (\BPR) \cite{rendle_09} also converges to the minimizer of the ranking loss ${\calL}(\omega)$ (Eq. \ref{eq:RL}) with the same rate.

The analysis of momentum algorithm \SO{}$_m$
is slightly more involved. We say that a function $f(x)$ satisfies the Polyak-\L{}ojsievich condition \cite{Polyak63,Lojas63,karimi2016linear} if the following inequality holds for some $\mu > 0$: 
\[
    \frac{1}{2} \|\nabla f(x)\|_2^2 \ge \mu (f(x) - f^*),
\]
where $f^*$ is a global minimum of $f(x)$.

\begin{theorem}\label{thrm:new-02}
Let ${\calL}(\omega)$ be a (possibly non-convex) $\beta$-smooth function which satisfies the Polyak-Lojasievich condition with a constant $\mu > 0$. 
Moreover, assume the number of interactions per user belongs to an interval $[b, B]$ almost surely for some positive $b$ and $B$, and Assumption \ref{ass:1} is satisfied with some $\sigma^2$ and $a$.
Then, for $N = \sum_{u = 1}^N |{\cal B}_u|$ 
and a constant step-size policy $\eta_u^t = \eta $ with $\eta \beta \le 1$, one has
\begin{align*}
    {\calL}(\omega_u^{t+1}) - {\calL}(\omega_*) 
    \le \exp(- \mu \eta N) ({\calL}(\omega_1^{0}) - {\calL}(\omega_*)) + \frac{\beta\sigma^2 \eta^2}{2(1 - \mu/\beta)}, \quad \eta\beta \le 1.
\end{align*}
where the estimation is uniform for all $a$, $0 \le a < 1/2$. 
 
In particular, if $\eta = c/\sqrt{N}$, under the same conditions one has
\begin{align*}
     {\calL}(\omega_u^t) - {\calL}(\omega_*) \le \exp(-\mu c \sqrt{N}) ({\calL}(\omega_1^0) - {\calL}(\omega_*)) + \frac{\beta\sigma^2c^2}{2(1 - \mu/\beta)N}. 
 \end{align*}
\end{theorem}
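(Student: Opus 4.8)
The plan is to open exactly as in the proof of Theorem~\ref{thrm:new-01} and then convert the gradient-norm guarantee into a last-iterate, geometrically contracting bound by invoking the Polyak-\L{}ojasiewicz (PL) condition. First I would reuse $\beta$-smoothness of $\calL$ together with Assumption~\ref{ass:1} to reach, for each block update, the one-step inequality already derived in~\eqref{eq:02},
\begin{equation*}
\calL(\omega_u^{t+1}) \le \calL(\omega_u^t) + \frac{\beta}{2}\eta^2\sigma^2 - \eta\, z_u^t \|\nabla\calL(\omega_u^t)\|_2^2, \qquad z_u^t = 1 - a^2 - \beta\eta(1/2 - a^2).
\end{equation*}
The elementary but decisive remark is that the step-size constraint $\eta\beta \le 1$ forces $z_u^t \ge 1/2$ for every $a \in [0,1/2)$, which is precisely what makes the final estimate uniform in $a$, as the statement asserts.

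Next I would substitute the PL inequality $\tfrac12\|\nabla\calL(\omega_u^t)\|_2^2 \ge \mu(\calL(\omega_u^t) - \calL(\omega_*))$ into the recursion. Writing $D_u^t := \calL(\omega_u^t) - \calL(\omega_*)$ and using $z_u^t \ge 1/2$, the descent inequality becomes a genuine contraction of the form $D_u^{t+1} \le (1 - \mu\eta)D_u^t + \tfrac{\beta}{2}\eta^2\sigma^2$, in which $1 - \mu\eta \in [0,1)$ because a $\beta$-smooth PL function necessarily satisfies $\mu \le \beta$ and $\eta\beta \le 1$. Since the weights are carried over from each user to the next, this recursion is unrolled across the whole stream of $N = \sum_u |\calB_u|$ consecutive updates: the transient is controlled by $(1-\mu\eta)^N \le \exp(-\mu\eta N)$ and the accumulated noise by a geometric series, yielding the two-term structure of the claimed bound (note that, unlike Theorem~\ref{thrm:new-01}, PL delivers a last-iterate guarantee rather than a best-iterate one). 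Setting $\eta = c/\sqrt{N}$, admissible since it then satisfies $\eta\beta \le 1$ and replaces $\eta^2$ by $c^2/N$, gives the corollary.

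The hard part will be that $\SO_m$ does not step along the current stochastic gradient but along the momentum buffer $v_u^{t+1} = \mu v_u^t + (1-\mu)\nabla\widehat{\calL}_{\calB_u^t}(w_u^t)$, a filtered average of past block gradients, so the plain descent lemma above does not apply verbatim and, in particular, the naive geometric sum does not by itself produce the exact steady-state constant $\beta\sigma^2\eta^2/(2(1-\mu/\beta))$. I expect the delicate step to be the construction of a Lyapunov potential combining $\calL(\omega_u^t)$ with the momentum energy $\|v_u^t\|_2^2$, chosen so that the filtered update still yields sufficient decrease and so that the variance $\sigma^2$ and the cross-block bias (governed by $a$ in the second part of Assumption~\ref{ass:1}) propagate through the buffer in a controlled manner rather than re-entering afresh at every iteration. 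It is this coupled signal/noise dynamics that separates the contraction rate of the transient from the accumulation rate of the noise and thereby fixes the precise denominator $1-\mu/\beta$; reconciling it with the interdependence of successive blocks for a fixed user is what makes the argument genuinely more involved than in Theorem~\ref{thrm:new-01}.
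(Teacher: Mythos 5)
Your first two paragraphs reconstruct the paper's proof essentially line for line: it starts from inequality~\eqref{eq:02} of Theorem~\ref{thrm:new-01}, applies the Polyak-\L{}ojasiewicz condition to replace $-\eta_u^t z_u^t\|\nabla\calL(\omega_u^t)\|_2^2$ by $-2\mu\eta_u^t z_u^t(\calL(\omega_u^t)-\calL(\omega_*))$, unrolls the resulting recursion over all $N=\sum_u|\calB_u|$ consecutive updates, sums the geometric progression, and finally uses $\eta\beta\le 1$ and $a<1/2$ to get $z\ge 1/2$, hence the transient factor $(1-\mu\eta)^N\le\exp(-\mu\eta N)$. Up to that point your plan and the paper's proof coincide.

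The divergence is your third paragraph, and it cuts both ways. First, the ``hard part'' you anticipate --- a Lyapunov potential coupling $\calL(\omega_u^t)$ with the momentum energy $\|v_u^t\|_2^2$ --- is simply not in the paper. Although the text introduces Theorem~\ref{thrm:new-02} as the analysis of \SO{}$_m$, the proof never references the buffer $v_u^t$, the mixing coefficient, or the step $\alpha$: it analyzes exactly the same plain gradient iterates as Theorem~\ref{thrm:new-01} (note also that $\mu$ does double duty in the paper, as momentum coefficient in the update rule and as PL constant in the theorem). So no momentum machinery is needed to reproduce what the paper actually proves --- but equally, the paper proves nothing about the momentum dynamics themselves. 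Second, your suspicion that ``the naive geometric sum does not by itself produce the exact steady-state constant'' is well founded, but the paper does not resolve it by a finer argument: it writes the sum of the series with ratio $r=1-2\mu\eta z$ as $\frac{\beta\sigma^2\eta^2}{2(1-2\mu\eta z)}$, i.e.\ as $c/r$ instead of $c/(1-r)=\frac{\beta\sigma^2\eta^2}{4\mu\eta z}$. The honest sum gives a noise term of order $\frac{\beta\sigma^2\eta}{4\mu z}\le\frac{\beta\sigma^2\eta}{2\mu}$, which is $O(\eta)$, hence $O(1/\sqrt{N})$ for $\eta=c/\sqrt{N}$, not the claimed $O(\eta^2)=O(1/N)$. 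In short: executing your plan faithfully yields a correct but weaker bound than the statement; the constant $\beta\sigma^2\eta^2/(2(1-\mu/\beta))$ is obtained in the paper through this algebraic slip in the geometric summation, not through the momentum/Lyapunov analysis you conjectured would be required.
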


\begin{proof}
Similarly to the Theorem~\ref{thrm:new-01}, From Ineq.~\eqref{eq:02} we have: 
\begin{align*}
    {\calL}(\omega_u^{t+1}) \le {\calL}(\omega_u^{t}) + \frac{\beta}{2} (\eta_u^t)^2 \sigma^2 - \eta_u^t z_u^t \|\nabla {\calL}(\omega_u^{t})\|_2^2
\end{align*}
for $z_u^t = 1 - a^2 - \beta \eta_u^t(1/2 - a^2)>0$. Further, using the Polyak-Lojasievich condition, it comes: 
\begin{align*}
    - \eta_u^t z_u^t \|\nabla {\calL}(\omega_u^{t})\|_2^2 \le - 2\mu \eta_u^t z_u^t (\calL(\omega_u^t) - \calL(\omega_*)), 
\end{align*}
and
\begin{align*}
    {\calL}(\omega_u^{t+1}) - {\calL}(\omega_*) & \le {\calL}(\omega_u^{t}) - {\calL}(\omega_*) + \frac{\beta}{2} (\eta_u^t)^2 \sigma^2 - 2\mu \eta_u^t z_u^t({\calL}(\omega_u^{t}) - {\calL}(\omega_*))\\
    & \le ({\calL}(\omega_u^{t}) - {\calL}(\omega_*)) (1 - 2\mu\eta_u^t z_u^t) + \frac{\beta}{2} (\eta_u^t)^2 \sigma^2\\
    & \le \prod_{u}\prod_{t} (1 - 2\mu\eta_u^t z_u^t) ({\calL}(\omega_1^{0}) - {\calL}(\omega_*)) \\
    & + \frac{\beta\sigma^2}{2}\sum_{v\le u} (\eta_v^t)^2\prod_{v}\prod_{t} (1 - 2\mu\eta_v^t z_v^t)
\end{align*}

Finally, for a constant step-size policy, $\eta_u^t = \eta$, one has $z_u^t = z = 1 - a^2 - \beta\eta(1/2-a^2)$ and
\begin{align*}
    {\calL}(\omega_u^{t+1}) - {\calL}(\omega_*) \le (1 - 2\mu \eta z)^N ({\calL}(\omega_1^{0}) - {\calL}(\omega_*)) + \frac{\beta\sigma^2 \eta^2}{2(1 - 2\mu \eta z)},
\end{align*}
where the last term is given by summing the geometric progression. As $\beta\eta \le 1$ and $a< 1/2$ one has $z \ge 1/2$. Thus
\begin{align*}
    {\calL}(\omega_u^{t+1}) - {\calL}(\omega_*) & \le (1 - \mu \eta )^N ({\calL}(\omega_1^{0}) - {\calL}(\omega_*)) + \frac{\beta\sigma^2 \eta^2}{2(1 - \mu/\beta)}\\
    & \le \exp(- \mu \eta N) ({\calL}(\omega_1^{0}) - {\calL}(\omega_*)) + \frac{\beta\sigma^2 \eta^2}{2(1 - \mu/\beta)}, \quad \eta\beta \le 1.
\end{align*}
Taking $\eta = c/\sqrt{N}$ for some positive $c$ guarantees a rate of convergence $O(1/N)$. With a different choice of the step-size policy, rates almost up to $O(1/N^2)$ are possible; however, these rates imply $O(1/N)$ convergence for the norm of the gradient which matches the standard efforts of stochastic gradient descent under the Polyak-Lojasievich condition \cite{karimi2016linear}. 
\end{proof}

\section{Experimental Setup and Results}
\label{sec:Exps}

In this section, we provide an empirical evaluation of our optimization strategy on some popular benchmarks proposed for evaluating RS. All subsequently discussed components were implemented in Python3 using the TensorFlow library\footnote{\url{https://www.tensorflow.org/}.} and computed on Skoltech CDISE HPC cluster ZHORES \cite{DBLP:journals/corr/abs-1902-07490}. 
We first proceed with a presentation of the general experimental set-up, including a description of the datasets and the baseline models.

\paragraph{Datasets. } We report results obtained on five publicly available data\-sets, for the task of personalized Top-N recommendation on the following collections~:
\begin{itemize}
\item \ML-1M \cite{Harper:2015:MDH:2866565.2827872} and {\NetF} consist of user-movie ratings, on a scale of one to five, collected from a movie recommendation service and the Netflix company. The latter was released to support the Netflix Prize competition. For both datasets, we consider ratings greater or equal to $4$ as positive feedback, and negative feedback otherwise.
\item  We extracted a subset out of the {\Out} dataset from of the Kaggle challenge\footnote{\url{https://www.kaggle.com/c/outbrain-click-prediction}} that consisted in the recommendation of news content to users based on the 1,597,426 implicit feedback collected from multiple publisher sites in the United States.
\vspace{1mm}\item  {\kasandr}\footnote{\url{https://archive.ics.uci.edu/ml/datasets/KASANDR}} dataset \cite{sidana17}  
contains 15,844,717 interactions of 2,158,859 users in Germany using Kelkoo's\footnote{\url{http://www.kelkoo.fr/}} online advertising platform.
\item {\RecS} is a sample based on historic XING data provided 6,330,562 feedback given by 39,518 users on the job posting items and the items generated by XING's job recommender system.
\item {\PANDOR}\footnote{\url{https://archive.ics.uci.edu/ml/datasets/PANDOR}} is another publicly available dataset for online recommendation \cite{sidana18}  provided by Purch\footnote{\url{http://www.purch.com/}}. The dataset records 2,073,379 clicks generated by 177,366 users of one of the Purch's high-tech website over 9,077 ads they have been shown during one month. 
\end{itemize}
\noindent Table \ref{tab:datasets} presents some detailed statistics about each collection. Among these, we report the average number of clicks (positive feedback) and the average number of items that were viewed but not clicked (negative feedback). As we see from the table,  {\Out}, {\kasandr},  and {\PANDOR} datasets are the

\begin{table}[htbp]
    \centering
    \resizebox{0.95\textwidth}{!}
    {
    \begin{tabular}{lccccc}
    \hline
    Data&$|\mathcal{U}|$&
    $|\mathcal{I}|$ & Sparsity&~~~~Avg. \# of $+$~~~~ & ~Avg. \# of $-$~~~~\\
    \hline
    {\ML}-1M&6,040&3,706&.9553&95.2767& 70.4690\\
   \Out&49,615&105,176&.9997&6.1587& 26.0377 \\
   \PANDOR&177,366&9,077&.9987&1.3266& 10.3632\\
     \NetF&90,137&3,560&.9914&26.1872& 20.2765\\
    \RecS&39,518&28,068&.9943&26.2876&133.9068\\
   \kasandr&2,158,859&291,485&.9999&2.4202& 51.9384\\
        \hline
    \end{tabular}
    }
    \caption{Statistics on the \# of users and items; as well as the sparsity and the average number of $+$ (preferred) and $-$ (non-preferred) items on {\ML}-1M, {\NetF}, {\Out}, {\kasandr} and {\PANDOR} collections after preprocessing.}
    \label{tab:datasets}
\end{table}

\noindent most unbalanced ones in regards to the number of preferred and non-preferred items. To construct the training and the test sets, we discarded users who did not interact over the shown items and sorted all interactions according to time-based on the existing time-stamps related to each dataset. Furthermore, we considered $80\%$ of each user's first interactions (both positive and negative) for training, and the remaining for the test. Table \ref{tab:detail_setting} presents the size of the training and the test sets as well as the percentage of positive feedback (preferred items) for all collections ordered by increasing training size. The percentage of positive feedback is inversely proportional to the size of the training sets, attaining $3\%$ for the largest, \kasandr{} collection. 
\begin{table}[h]
    \centering
    \begin{tabular}{lcccc}
    \hline
    Dataset&$|S_{train}|$~~~~&~~~$|S_{test}|$~~&~~$pos_{train}$~~&~~$pos_{test}$\\
    \hline
    {\ML}-1M&797,758&202,451&58.82&52.39\\
     \Out&1,261,373&336,053&17.64&24.73\\
    \PANDOR&1,579,716&493,663&11.04&12.33\\
    \NetF&3,314,621&873,477&56.27&56.70\\
    \RecS&5,048,653&1,281,909&17.07&13.81\\
    \kasandr&12,509,509&3,335,208&3.36&8.56\\
    \hline
    \end{tabular}
    \caption{Number of interactions used for train and test on each dataset, and the percentage of positive feedback among these interactions.}
    \label{tab:detail_setting}
\end{table}

We also analyzed the distributions of the number of blocks and their size for different collections. Figure \ref{fig:boxplots} (left) shows boxplots representing the logarithm of the number of blocks through their quartiles for all collections. From these plots, it comes out that the distribution of the number of blocks on {\PANDOR}, {\NetF} and {\kasandr} are heavy-tailed with more than the half of the users interacting no more than 10 times with the system. Furthermore, we note that on {\PANDOR} the average number of blocks is much smaller than on the two other collections; and that on all three collections the maximum numbers of blocks are $10$ times more than the average. These plots suggest that a very small number of users (perhaps bots) have an abnormal interaction with the system generating a huge amount of blocks on these three collections. To have a better understanding, Figure \ref{fig:boxplots} (right) depicts the number of blocks concerning their size on {\kasandr}. The distribution of the number of blocks follows a power law distribution and it is the same for the other collections that we did not report for the sake of space. In all collections, the number of blocks having more than $5$ items drops drastically. As in both variants of {\SO} positive and negative items are not sampled for updating the weights, these updates are performed on blocks of small size, and are made very often. 

\begin{figure}[t!]
    \begin{center}
    \begin{tabular}{ccc}
    \begin{tikzpicture}[scale=.6]
        \begin{axis}
            [ 
                ylabel = {$\log_{10}(\text{Number of blocks})$},
                boxplot/draw direction=y,
                xtick={1,2,3,4,5,6},
                xticklabels={\ML-1M, \Out,\PANDOR,\NetF,\kasandr, \RecS},
                x tick label style={font=\footnotesize, rotate=45,anchor=east}
            ]
    \addplot+[mark = *, mark options = {black!30},
        boxplot prepared={
        lower whisker=0,
        lower quartile=0.84,
        median=1.67,
        average = 1.89,
        upper quartile=2.61,
        upper whisker=3.08
    }, color = black!60
    ]coordinates{};
    \addplot+[mark = *, mark options = {black!30},
    boxplot prepared={
        lower whisker=0.301,
        lower quartile=0.47,
        median=0.602,
        average = 0.651,
        upper quartile=1,
        upper whisker=1.46
    }, color = black!60
    ]coordinates{};
       \addplot+[mark = *, mark options = {black!30},
    boxplot prepared={
      lower whisker=0,
      lower quartile=0,
      median=0,
      average = 0.11,
      upper quartile=0.6,
      upper whisker=1.85
    }, color = black!60
    ]coordinates{};
        \addplot+[mark = *, mark options = {black!30},
    boxplot prepared={
      lower whisker=0,
      lower quartile=0.1,
      median=1.07,
      average = 1.34,
      upper quartile=2.12,
      upper whisker=2.77
    }, color = black!60
    ]coordinates{};
        \addplot+[mark = *, mark options = {black!30},
    boxplot prepared={
      lower whisker=0,
      lower quartile=0,
      median=0,
      average = 0.66,
      upper quartile=1.65,
      upper whisker=3.22
    }, color = black!60
    ]coordinates{};
    \addplot+[mark = *, mark options = {black!30},
      boxplot prepared={
      lower whisker=0,
      lower quartile=0.85,
      median= 1.18,
      average = 1.34,
      upper quartile=1.45,
      upper whisker=3.01
    }, color = black!60
    ]coordinates{};
    \end{axis}
\end{tikzpicture}
& ~~~~& 
\begin{tikzpicture}[scale=.6]
\vspace{-8mm}\begin{axis}
[ 
    ybar,
 xmajorgrids, 
 bar width = {2.48em},
 yminorticks=true, 
 ymajorgrids, 
 yminorgrids,
 ylabel={Number of blocks},
 xlabel={Size of the blocks},
 ymin = 0.0,
 ymax= 160000.0,
 symbolic x coords={1-5, 5-10, 10-15, 15-20, 20-25, 25-30, 30-35, 35-40, 40-45},
 xtick=data,
label style={font=\footnotesize},
 ];
 \addplot [fill=gray] coordinates {
      (1-5, 154873)
      (5-10, 61243) 
      (10-15, 24733)
      (15-20, 14330)
      (20-25, 11907)
      (25-30, 7281)
      (30-35, 6120)
    };
\end{axis}
\end{tikzpicture}\\
(a) &~& (b) 
\end{tabular}
        \end{center}
    \caption{(a) Boxplots depicting the logarithm of the number of  blocks through their quartiles for all collections. The median (resp. mean) is represented by the band (resp. diamond) inside the box. The ends of the whiskers represent the minimum and the maximum of the values. (b) Distributions of negative feedback over the blocks in the training set on  {\kasandr}.
    }
    \label{fig:boxplots}
\end{figure}
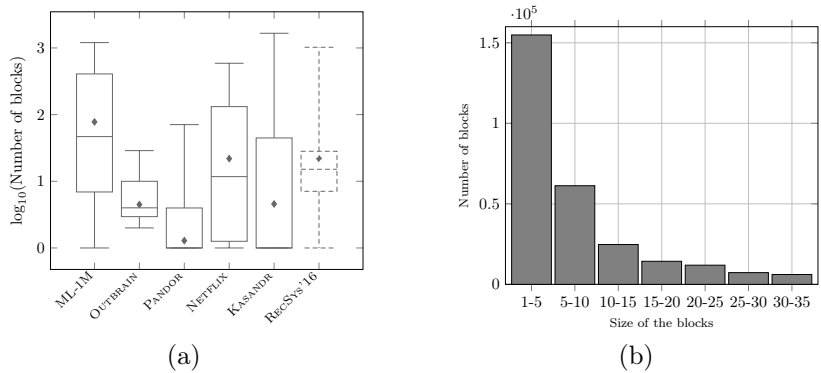
\paragraph{Compared approaches. } To validate the sequential learning approach described in the previous sections, we compared the proposed \SO{} algorithm\footnote{The source code is available at \url{https://github.com/SashaBurashnikova/SAROS}.} with the following approaches.  
\begin{itemize}
    \item {\MostPop} is a non-learning based approach which consists in recommending the same set of popular items to all users.
    \item Matrix Factorization ({\MF}) \cite{Koren08}, is a factor model which decomposes the matrix of user-item interactions into a set of low dimensional vectors in the same latent space, by minimizing a regularized least square error between the actual value of the scores and the dot product over the user and item representations. 
    \item {\BPR} \cite{rendle_09} corresponds to the model described in the problem statement above (Section \ref{sec:LO}), a stochastic  gradient-descent  algorithm, based  on  bootstrap  sampling  of  training  triplets, and {\batch} the batch version of the model which consists in finding  the model parameters $\weight=(\bfU,\bfI)$ by minimizing the ranking loss over all the set of triplets simultaneously (Eq.~\ref{eq:RL}).
   \item {\ProdVec} \cite{GrbovicRDBSBS15},  learns the representation of items using a Neural Networks based model, called word2vec \cite{word_emb}, and performs next-items recommendation using the similarity between the representations of items.
   
    \item {\GRU}$+$ \cite{hidasi2018recurrent} is an extended version of {\GRU} \cite{hidasi} adopted to different loss functions, that applies recurrent neural network with a GRU architecture for session-based recommendation. The approach considers the session as the sequence of clicks of the user and learns  model parameters by optimizing a regularized approximation of the relative rank of the relevant item which favors the preferred items to be ranked at the top of the list.
    
    \item {\caser} \cite{tang2018caser} is a CNN based model that embeds a sequence of clicked items into a temporal image and latent spaces and find local characteristics of the temporal image using convolution filters. 
    
    \item {\SASR} \cite{DBLP:conf/icdm/Kang18} uses  an attention mechanism to capture long-term semantics in the sequence of clicked items  and then predicts the next item to present based on a user's action history.
\end{itemize}

Hyper-parameters of different models and the dimension of the embedded space for the representation of users and items; as well as the regularisation parameter over the norms of the embeddings for all approaches were found by cross-validation. 

We fixed $b$ and $B$, used in $\SO_b$, to respectively the minimum and the average number of blocks found on the training set of each corresponding collection. With the average number of blocks being greater than the median on all collections, the motivation here is to consider the maximum number of blocks by preserving the model from the bias brought by the too many interactions of the very few number of users. For more details regarding the exact values for the parameters, see the Table \ref{parameters}.

\begin{table}[H]
    \centering
    \resizebox{0.95\textwidth}{!}
    {
    \begin{tabular}{c|cccccc}
    \hline
    Parameter&{\ML}&{\Out}&{\PANDOR}&{\NetF}&{\kasandr}&{\RecS}\\
    \hline
    $B$&78&5&2&22&5&22\\
    $b$&1&2&1&1&1&1\\
    Learning rate&.05&.05&.05&.05&.4&.3\\
    \hline
    \end{tabular}
    }
    \caption{Hyperparameter values of $\SO_b$.}
    \label{parameters}
\end{table}

\paragraph{Evaluation setting and results. }
We begin our comparisons by testing {\batch}, \BPR{} and \SO{} approaches over the logistic ranking loss (Eq. \ref{eq:instloss}) which is used to train them. Results on the test, after training the models till the convergence are shown in Table \ref{test_loss}. {\batch} (resp. {\SO}) techniques have the worse (resp. best) test loss on all collections, and the difference between their performance is larger for bigger size datasets. 

\begin{table}[!h]
\centering
{
\begin{tabular}{cc|cccccc}
\hline
Dataset &~~&\multicolumn{6}{c}{Test loss at convergence, Eq.~\eqref{eq:RL}}\\
\cline{3-8}
&~~&\batch&~~&{\BPR}&~~&$\SO_b$&$\SO_m$\\
\hline
\ML-1M &~~&0.744&~~&0.645&~~&\bf{0.608}&0.637\\
\Out &~~& 0.747&~~&0.638&~~&0.635&\bf{0.634}\\
\PANDOR &~~&0.694&~~&0.661&~~&\bf{0.651}&0.666\\
\NetF &~~&0.694&~~&0.651&~~&\bf{0.614}&0.618\\
\kasandr &~~&0.631&~~&0.393&~~&\bf{0.212}&0.257\\
\RecS &~~&0.761&~~&0.644&~~&0.640&\bf{0.616}\\
\hline
\end{tabular}
}
\caption{Comparison between \BPR, \batch{} and \SO{} approaches in terms of test loss at convergence.}
\label{test_loss}
\end{table}

These results suggest that the local ranking between preferred and no-preferred items present in the blocks of the training set reflects better  the preference of users than the ranking of random pairs of items as it is done ine {\BPR} without this sequence information of \textit{viewed but not clicked} and \textit{viewed and clicked} over items in user's sessions. Furthermore, as in {\SO} updates occur after the creation of a block, and that the most of the blocks contain very few items (Figure \ref{fig:boxplots} - right),  weights are updated more often than in {\BPR} or {\batch}. This is depicted in Figure \ref{fig:losses} which shows the evolution of the training error over time for {\batch}, \BPR{} and \SO{} on all collections. As we can see, the training error decreases in all cases and the three approaches converge to the same minimizer of the ranking loss (Eq.~\ref{eq:RL}). This is an empirical evidence of the convergence of $\SO_b$ and $\SO_m$, showing that the sequence of weights found by the algorithm minimizing (Eq. \ref{eq:CLoss}) allows to minimize the general ranking loss (Eq.~\ref{eq:RL}) as it is stated in Theorems 1 and 2.

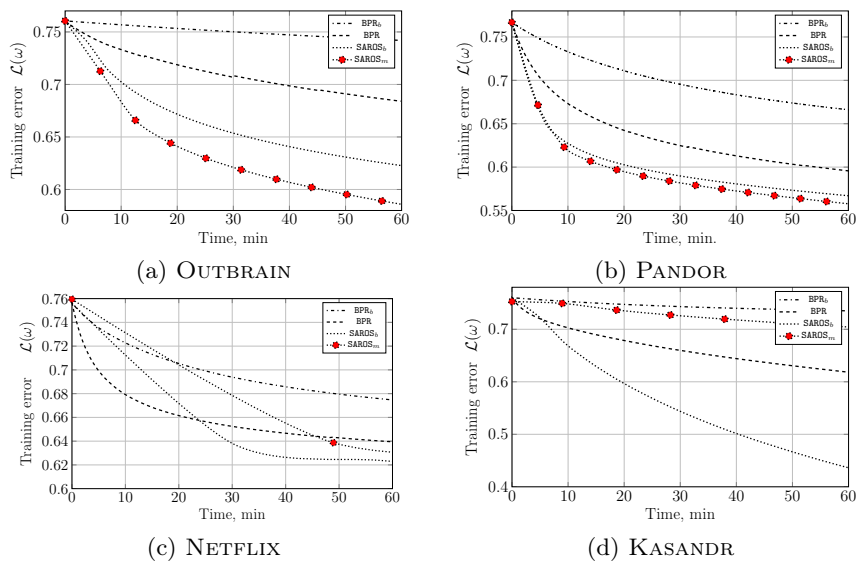
\begin{figure}[h!]
\small
    \centering
    \begin{tabular}{cc}
\begin{tikzpicture}[scale=0.42]
\begin{axis}[ 
 width=\columnwidth, 
 height=0.65\columnwidth, 
 xmajorgrids, 
 yminorticks=true, 
 ymajorgrids, 
 yminorgrids,
 ylabel={Training error ~$\mathcal{L}(\omega)$},
 xlabel = {Time, min},
 ymin = 0.58,
 ymax=0.77,
 xmin = 0,
 xmax = 60,
label style={font=\Large} ,
tick label style={font=\Large}
 ];

  \addplot  [color=black,
                dash pattern=on 1pt off 3pt on 3pt off 3pt,
                mark=none,
                mark options={solid},
                smooth,
                line width=1.2pt]  file {out_batch.txt }; 
 \addlegendentry{ \batch };

  \addplot  [color=black,
                dashed,
                mark=none,
                mark options={solid},
                smooth,
                line width=1.2pt]  file {out_sgd.txt }; 
 \addlegendentry{ \BPR };
  \addplot  [color=black,
                dotted,
                mark=none,
                mark options={solid},
                smooth,
                line width=1.2pt]  file {out_online.txt }; 
 \addlegendentry{ $\SO{}_b$ };
 
 \addplot  [color=black,
                dotted,
                mark=*,
                mark options={scale = 1.5, fill = red},
                smooth,
                line width=1.2pt]  file {out_momentum.txt }; 
 \addlegendentry{ $\SO{}_m$ };
\end{axis}
\end{tikzpicture} & \begin{tikzpicture}[scale=0.42]
\begin{axis}[ 
 width=1.0\columnwidth, 
 height=0.65\columnwidth, 
 xmajorgrids, 
 yminorticks=true, 
 ymajorgrids, 
 yminorgrids,
 ylabel={Training error ~$\mathcal{L}(\omega)$},
 xlabel = {Time, min.},
 ymin = 0.55,
 ymax= 0.78,
 xmin = 0,
 xmax = 60,
label style={font=\Large} ,
tick label style={font=\Large}
 ]
 
 \addplot  [color=black,
                dash pattern=on 1pt off 3pt on 3pt off 3pt,
                mark=none,
                mark options={solid},
                smooth,
                line width=1.2pt]  file {pandor_batch.txt }; 
 \addlegendentry{ \batch }; 
 
   \addplot  [color=black,
                dashed,
                mark=none,
                mark options={solid},
                smooth,
                line width=1.2pt]  file {pandor_sgd.txt }; 
 \addlegendentry{ \BPR };

  \addplot  [color=black,
                dotted,
                mark=none,
                mark options={solid},
                smooth,
                line width=1.2pt]  file {pandor_online.txt }; 
 \addlegendentry{ $\SO{}_b$ };
 
   \addplot  [color=black,
                dotted,
                mark=*,
                mark options={scale = 1.5, fill = red},
                smooth,
                line width=1.2pt]  file {pandor_momentum.txt }; 
 \addlegendentry{ $\SO{}_m$ };
\end{axis}
\end{tikzpicture} \\
(a) \Out & (b) \PANDOR\\
\begin{tikzpicture}[scale=0.4]
\begin{axis}[ 
 width=\columnwidth, 
 height=0.65\columnwidth, 
 xmajorgrids, 
 yminorticks=true, 
 ymajorgrids, 
 yminorgrids,
 ylabel={Training error ~~~~$\mathcal{L}(\omega)$},
 xlabel = {Time, min},
 ymin = 0.6,
 ymax=0.76,
 xmin = 0,
 xmax = 60,
label style={font=\Large} ,
tick label style={font=\Large}
 ];
 
  \addplot  [color=black,
                dash pattern=on 1pt off 3pt on 3pt off 3pt,
                mark=none,
                mark options={solid},
                smooth,
                line width=1.2pt]  file {netflix_batch.txt }; 
 \addlegendentry{ \batch };

  \addplot  [color=black,
                dashed,
                mark=none,
                mark options={solid},
                smooth,
                line width=1.2pt]  file {netflix_sgd.txt }; 
 \addlegendentry{ \BPR };
  \addplot  [color=black,
                dotted,
                mark=none,
                mark options={solid},
                smooth,
                line width=1.2pt]  file {netflix_online.txt }; 
 \addlegendentry{ $\SO{}_b$ };
 \addplot  [color=black,
                dotted,
                mark=*,
                mark options={scale = 1.5, fill = red},
                smooth,
                line width=1.2pt]  file {netflix_momentum.txt }; 
 \addlegendentry{ $\SO{}_m$ };
 \end{axis}
\end{tikzpicture} & \begin{tikzpicture}[scale=0.42]
\begin{axis}[ 
 width=\columnwidth, 
 height=0.65\columnwidth, 
 xmajorgrids, 
 yminorticks=true, 
 ymajorgrids, 
 yminorgrids,
 ylabel={Training error ~$\mathcal{L}(\omega)$},
 xlabel = {Time, min},
 ymin = 0.4,
 ymax=0.78,
 xmin = 0,
 xmax = 60,
label style={font=\Large} ,
tick label style={font=\Large}
 ];

 \addplot  [color=black,
                dash pattern=on 1pt off 3pt on 3pt off 3pt,
                mark=none,
                mark options={solid},
                smooth,
                line width=1.2pt]  file {kassandr_batch.txt }; 
 \addlegendentry{ \batch };

  \addplot  [color=black,
                dashed,
                mark=none,
                mark options={solid},
                smooth,
                line width=1.2pt]  file {kassandr_sgd.txt }; 
 \addlegendentry{ \BPR };
 
  \addplot  [color=black,
                dotted,
                mark=none,
                mark options={solid},
                smooth,
                line width=1.2pt]  file {kassandr_online.txt }; 
 \addlegendentry{$\SO{}_b$};
    \addplot  [color=black,
                dotted,
                mark=*,
                mark options={scale = 1.5, fill = red},
                smooth,
                line width=1.2pt]  file {kassandr_momentum.txt }; 
 \addlegendentry{ $\SO{}_m$ };

\end{axis}
\end{tikzpicture}\\
(c) \NetF & (d) \kasandr\\
\end{tabular}
\caption{Evolution of the loss on training sets for both {\batch}, {\BPR} and {\SO} as a function of time in minutes for all collections.}
    \label{fig:losses}
\end{figure}

We also compare the performance of all the approaches on the basis of the common ranking metrics, which are the Mean Average Precision at rank $K$ (\mapk) over all users  defined as $\mapk=\frac{1}{N}\sum_{u=1}^{N}\apk(u)$, where $\apk(u)$ is the average precision of  preferred items of user $u$ in the top $K$ ranked ones; and the Normalized Discounted Cumulative Gain at rank $K$ (\ndcgk) that computes the ratio of the obtained ranking to the ideal case and allow to consider not only binary relevance as in Mean Average Precision, $\ndcgk = \frac{1}{N}\sum_{u=1}^{N}\frac{\dcgk(u)}{\idcgk(u)}$, where $\dcgk(u) = \sum_{i=1}^{K}\frac{2^{rel_{i}}-1}{\log_{2}(1+i)}$, $rel_{i}$ is the graded relevance of the item at position $i$; and $\idcgk(u)$ is $\dcgk(u)$ with an ideal ordering equals to $\sum_{i=1}^{K}\frac{1}{\log_{2}(1+i)}$ for $rel_{i}\in[0,1]$  \cite{Manning:2008}.\\

To estimate the importance of the maximum number of blocks ($B$) for $\SO_b$, we explore the dependency between quality metrics {\mapk} and {\ndcgk} on {\ML-1M} and {\PANDOR} collections (Figure \ref{fig:map5_blocks}). The latter records the clicks generated by users on one of Purch’s high-tech website and it was subject to bot attacks \cite{sidana18}. For this collection, large values of $B$ affects  {\mapk} while the measure reaches a plateau on {\ML-1M}. The choice of this hyperparameter may then have an impact on the results. As future work, we are investigating the modelling of bot attacks by studying the effect of long memory in the blocks of no-preferred and preferred items in small and large sessions with the aim of automatically fixing this threshold $B$.

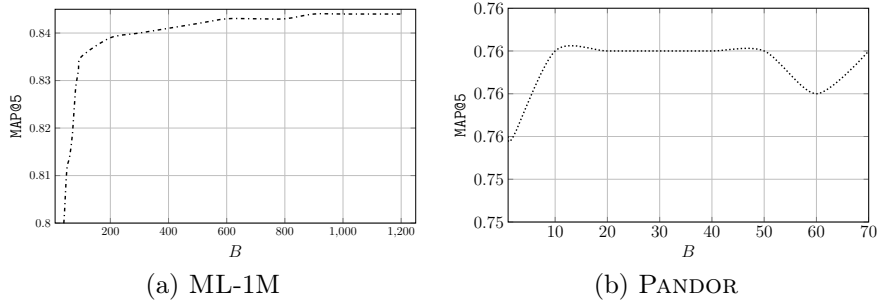
\begin{figure}[h!]
    \centering
    \begin{tabular}{cc}



\begin{tikzpicture}[scale=0.45]
\begin{axis}[ 
 width=1.0\columnwidth, 
 height=0.65\columnwidth, 
 xmajorgrids, 
 yminorticks=true, 
 ymajorgrids, 
 yminorgrids,
 ylabel={\mapfive},
 xlabel = {$B$},
 ymin = 0.800,
 ymax= 0.845,
 xmin = 10,
 xmax = 1250,
label style={font=\Large} ,
 ];

 \addplot  [color=black,
                dash pattern=on 1pt off 3pt on 3pt off 3pt,
                mark=none,
                mark options={solid},
                smooth,
                line width=1.2pt]  file {ml_blocks_tuning.txt};

\end{axis}
\end{tikzpicture}
&
\begin{tikzpicture}[scale=0.45]
\begin{axis}[ 
 width=\columnwidth, 
 height=0.65\columnwidth, 
 xmajorgrids, 
 yminorticks=true, 
 ymajorgrids, 
 yminorgrids,
 ylabel={{\mapfive}},
 xlabel = {$B$},
 ymin = 0.754,
 ymax=0.759,
 xmin = 1,
 xmax = 70,
legend style={font=\Large},
label style={font=\Large} ,
tick label style={font=\Large}
 ];

  \addplot  [color=black,
                dotted,
                mark=none,
                mark options={solid},
                smooth,
                line width=1.2pt]  file {pandor_map5_blocks}; 
\end{axis}
\end{tikzpicture}\\
(a) \ML-1M & (b) \PANDOR\\
\end{tabular}
\caption{Evolution of {\mapfive} with respect to largest number of allowed blocks, $B$.}
    \label{fig:map5_blocks}
\end{figure}
Table \ref{tab:online_vs_all_ndcg_1h} presents {\mapfive} and {\mapten} (top), and {\ndcgfive} and {\ndcgten} (down)  of  all approaches over the test sets of the different collections.
\begin{table*}[h!]
    \centering
     \resizebox{1.0\textwidth}{!}{\begin{tabular}{c|cccccc|cccccc}
    \hline
      &\multicolumn{6}{c|}{\mapfive}&\multicolumn{6}{c}{\mapten}\\
     \cline{2-13}
     &{\ML-1M}&\Out&\PANDOR&{\NetF}&{\kasandr}&{\RecS}&{\ML-1M}&\Out&\PANDOR&{\NetF}&{\kasandr}&{\RecS}\\
     \hline
     \MostPop  & .074&.007 &.003 &.039 & .002&.003& .083&.009 &.004 &.051 &.3e-5&.004 \\
     \ProdVec  & .793& .228& .063& .669& .012&.210& .772& .228&.063 &.690 &.012&.220 \\
     \MF       & .733&.531 &.266 & .793& .170&.312& .718& .522& .267& .778& .176&.306\\
    \batch     & .713& .477& .685& .764& .473&.343& .688& .477& .690& .748& .488&.356\\
     \BPR      & \underline{.826}& \underline{.573}& \underline{.734}& \underline{.855}& .507&\bf{.578}&\underline{.797}& \underline{.563}& \bf{.760} & \underline{.835}& .521&\bf{.571}\\
     {\GRU}$+$ & .777& .513 & .673& .774& \underline{.719}& .521& .750& .509&.677 &.757 &\underline{.720}&.500\\
      {\caser} & .718& .471& .522& .749& .186&.218& .694& .473& .527& .733& .197&.218\\
     {\SASR}   & .776 & .542& .682& .819& .480&.521& .751& .534& .687& .799& .495&.511\\
     $\SO_m$&.816&.577&.720&.857&.644&.495&.787&.567&.723&.837&.651&.494\\
     $\SO_b$     & \bf{.832}& \bf{.619}& \bf{.756}& \bf{.866}& \bf{.732}&\underline{.570}& \bf{.808}& \bf{.607}& \underline{.759}& \bf{.846}& \bf{.747}&\underline{.561}\\

     \hline
    \end{tabular}
    }
~\\
~\\
     \resizebox{1.0\textwidth}{!}{\begin{tabular}{c|cccccc|cccccc}
    \hline
     &\multicolumn{6}{c|}{\ndcgfive}&\multicolumn{6}{c}{\ndcgten}\\
     \cline{2-13}
     &{\ML-1M}&\Out&\PANDOR&{\NetF}&{\kasandr}&{\RecS}&{\ML-1M}&\Out&\PANDOR&{\NetF}&{\kasandr}&{\RecS}\\
     \hline
     \MostPop  & .090&.011 &.005 &.056 & .002&.004& .130&.014 &.008 &.096 &.002&.007\\
     \ProdVec  & .758& .232& .078& .712& .012&.219& .842& .232&.080 &.770 &.012&.307 \\
     \MF       & .684&.612 &.300 & .795& .197&.317& .805& .684& .303& .834& .219&.396\\
    \batch     & .652& .583& .874& .770& .567&.353& .784& .658& .890& .849& .616&.468\\
     \BPR      & \underline{.776}& \underline{.671}& \underline{.889}& \underline{.854}& .603 &\bf{.575}&\underline{.863}& \underline{.724}& \underline{.905}& \underline{.903}& .650&\bf{.673}\\
     {\GRU}$+$ & .721& .633 & .843& .777& \underline{.760}&.507& .833& .680&.862 &.854 &\underline{.782}&.613\\
      {\caser} & .665& .585& .647& .750& .241&.225& .787& .658& .666& .834& .276&.225\\
     {\SASR}   & .721 & .645& .852& .819& .569&.509& .832& .704& .873& .883& .625&.605\\
     {$\SO_m$}&.763&.674&.885&.857&.735&.492&.858&.726&.899&.909&.765&.603\\
     {$\SO_b$}     & \bf{.788}& \bf{.710}& \bf{.904}& \bf{.866}& \bf{.791}&\underline{.563}& \bf{.874}& \bf{.755}& \bf{.917}& \bf{.914}& \bf{.815}&\underline{.662}\\

     \hline
    \end{tabular}
    }
    \caption{Comparison between \MostPop, \ProdVec, \MF, \batch, \BPR{}, {\GRU}$+$, \SASR, \caser, and \SO{} approaches in terms of \mapfive{} and \mapten (top), and \ndcgfive{} and \ndcgten (down). Best performance is in bold and the second best is underlined.}
    \label{tab:online_vs_all_ndcg_1h}
\end{table*}
The non-machine learning method, {\MostPop},  gives results of an order of magnitude lower than the learning based approaches. Moreover,  the factorization model {\MF} which predicts clicks by matrix completion is less effective when dealing with implicit feedback than ranking based models especially on large datasets where there are fewer interactions. We also found that embeddings found by ranking based models, in the way that the user preference over the pairs of items is preserved in the embedded space by the dot product, are more robust than the ones found by {\ProdVec}. When comparing {\GRU}$+$ with {\BPR} that also minimizes the same surrogate ranking loss, the former outperforms it in case of {\kasandr} with a huge imbalance between positive and negative interactions.

This is mainly because  {\GRU}$+$ optimizes an approximation of the relative rank that favors interacted items to be in the top of the ranked list while the logistic ranking loss, which is mostly related to the Area under the ROC curve \cite{Usunier:1121}, pushes up clicked items for having good ranks in average. However, the minimization of the logistic ranking loss over blocks of very small size pushes the clicked item to be ranked higher than the no-clicked ones in several lists of small size and it has the effect of favoring the clicked item to be at the top of the whole merged lists of items. Moreover,  it comes out that {\SO} is the most competitive approach, performing better than other approaches almost over all collections, except on {\RecS} and {\PANDOR} (for {\mapten}), where \BPR{} gave the best results. We suspect that this is due to a lack of long dependencies in the sequence of items in the formed blocks which cannot be captured by {\SO}. We plan to explore this issue as part of our future work.

\section{Conclusion}\label{sec:Conclusion}

The contributions of this paper are twofold. First, we proposed {\SO},  a novel learning framework for large-scale Recommender Systems that sequentially updates the weights of a ranking function user by user over blocks of items ordered by time where each block is a sequence of negative items followed by a last positive one. The main hypothesis of the approach is that the preferred and no-preferred items within a local sequence of user interactions express better the user preference than when considering the whole set of preferred and no-preferred items independently one from another. We presented two variants of the approach; in the first model parameters are updated user per user over blocks of items constituted by a sequence of unclicked items followed by a clicked one. The parameter updates are discarded for users who interact very little or a lot with the system. The second variant, is based on the momentum technique as a means of damping oscillations.  The second contribution is a theoretical analysis of the proposed approach which bounds the deviation of the ranking loss concerning the sequence of weights found by both variants of the algorithm and its minimum in the general case of non-convex ranking loss.  Empirical results conducted on six real-life implicit feedback datasets support our founding and show that the proposed approach is significantly faster than the common batch and online optimization strategies that consist in updating the parameters over the whole set of users at each epoch, or after sampling random pairs of preferred and no-preferred items. The approach is also shown to be highly competitive concerning state of the art approaches on \texttt{MAP} and \texttt{NDCG} measures.

\section{Acknowledgements}
This work at Los Alamos was supported by the U.S. Department of Energy through the Los Alamos National Laboratory as part of LANL LDRD 20210078DR and 20190351ER. Los Alamos National Laboratory is operated by Triad National Security, LLC, for the National Nuclear Security Administration of U.S. Department of Energy (Contract No. 89233218CNA000001).

\vskip 0.2in
\bibliography{references.bib}
\bibliographystyle{chicago}

\end{document}